
\documentclass[letterpaper, 10 pt, conference]{ieeeconf}
\usepackage{cite,tikz}
\usepackage{amsmath,amssymb, amsfonts}
\usepackage{algorithmic}
\usepackage{graphicx}
\usepackage{epstopdf}
\usepackage{textcomp}
\usepackage{multirow}


\IEEEoverridecommandlockouts

\overrideIEEEmargins
\newtheorem{theorem}{Theorem}
\newtheorem{problem}{Problem}
\newtheorem{lemma}{Lemma}
\newtheorem{remark}{Remark}
\def\BibTeX{{\rm B\kern-.05em{\sc i\kern-.025em b}\kern-.08em
    T\kern-.1667em\lower.7ex\hbox{E}\kern-.125emX}}
\graphicspath{{Figure/}}
  \DeclareGraphicsExtensions{.pdf,.eps}

\begin{document}

\title{\LARGE \bf A Real-Time Optimal Eco-driving for
Autonomous Vehicles Crossing Multiple Signalized Intersections*}
\author{Xiangyu Meng$^{1}$ and Christos~G.~Cassandras$^{1}$\thanks{%
*This work was supported in part by NSF under grants ECCS-1509084,
IIP-1430145, and CNS-1645681, by AFOSR under grant FA9550-12-1-0113, by ARPA-E's NEXTCAR program under grant DE-AR0000796, and by Bosch and the MathWorks.}\thanks{$^{1}$The
authors are with the Division of Systems Engineering and Center for
Information and Systems Engineering, Boston University, Brookline, MA 02446,
USA \texttt{\small \{xymeng,cgc\}@bu.edu}}}
\maketitle

\begin{abstract}
This paper develops an optimal acceleration/speed
profile for a single autonomous vehicle crossing multiple signalized intersections
without stopping in free flow mode. The design objective is to produce both time and energy
efficient acceleration profiles of autonomous vehicles based on
vehicle-to-infrastructure communication. Our design approach differs from
most existing approaches based on numerical calculations: it
begins with identifying the structure of the optimal acceleration profile and then showing that it is characterized by several parameters, which are used for design
optimization. Therefore, the infinite dimensional optimal control problem is
transformed into a finite dimensional parametric optimization problem, which
enables a real-time online analytical solution. The simulation results show
quantitatively the advantages of considering multiple intersections jointly
rather than dealing with them individually. Based on mild assumptions, the optimal eco-driving algorithm is readily extended to include interfering traffic.
\end{abstract}


\thispagestyle{empty} \pagestyle{empty}

\begin{keywords}
Autonomous vehicles, interior-point constraints, optimal control, parametric optimization, vehicle-to-infrastructure communication
\end{keywords}

\section{Introduction}

The alarming state of existing transportation systems has been well
documented. For instance, in 2014, congestion caused vehicles in urban areas
to spend 6.9 billion additional hours on the road at a cost of an extra 3.1
billion gallons of fuel, resulting in a total cost estimated at 160 billion
\cite{schrank20152015}. From a control and optimization standpoint, the
challenges stem from requirements for increased safety, increased efficiency
in energy consumption, and lower congestion both in highway and urban
traffic. Connected and automated vehicles (CAVs), commonly known as
self-driving or autonomous vehicles, provide an intriguing opportunity for
enabling users to better monitor transportation network conditions and to
improve traffic flow. Their proliferation has rapidly grown, largely as a
result of Vehicle-to-X (or V2X) technology \cite{li2014survey} which refers
to an intelligent transportation system where all vehicles and
infrastructure components are interconnected with each other. Such
connectivity provides precise knowledge of the traffic situation across the
entire road network, which in turn helps optimize traffic flows, enhance
safety, reduce congestion, and minimize emissions. Controlling a vehicle to
improve energy consumption has been studied extensively, e.g., see \cite%
{gilbert1976vehicle, hooker1998optimal, hellstrom2010design, li2012minimum}.
By utilizing road topography information, an energy-optimal control
algorithm for heavy diesel trucks is developed in \cite{hellstrom2010design}%
. Based on Vehicle-to-Vehicle (V2V) communication, a minimum energy control
strategy is investigated in car-following scenarios in \cite{li2012minimum}.
Another important line of research focuses on coordinating vehicles at
intersections to increase traffic flow while also reducing energy
consumption. Depending on the control objectives, work in this area can be
classified as dynamically controlling traffic signals \cite{fleck2016adaptive}
and as coordinating vehicles \cite{milanes2010controller},\cite%
{alonso2011autonomous},\cite{huang2012assessing},\cite{kim2014mpc}. More
recently, an optimal control framework is proposed in \cite%
{malikopoulos2018decentralized} for CAVs to cross one or two adjacent
intersections in an urban area. The state of art and current trends in the
coordination of CAVs is provided in \cite{rios2017survey}.

Our focus in this paper is on an optimal control approach for a single
autonomous vehicle approaching multiple intersections in free flow mode in terms of energy
consumption and taking advantage of traffic light information. The term
\textquotedblleft ECO-AND\textquotedblright\ (short for \textquotedblleft
Economical Arrival and Departure\textquotedblright) is often used in the literature to refer
to this problem \cite{barth2011dynamic}. Its solution is made possible by vehicle-to-infrastructure
(V2I) communication, which enables a vehicle to automatically receive
signals from upcoming traffic lights before they appear in its visual range.
For example, such a V2I communication system has been launched in Audi cars
in Las Vegas by offering a traffic light timer on their dashboards: as the
car approaches an intersection, a red traffic light symbol and a
\textquotedblleft time-to-go\textquotedblright\ countdown appear in the
digital display and reads how long it will be before the traffic light ahead
turns green \cite{v2i}. Clearly, an autonomous vehicle can take advantage of
such information in order to go beyond current \textquotedblleft
stop-and-go\textquotedblright\ to achieve \textquotedblleft
stop-free\textquotedblright\ driving. Along these lines, the problem of
avoiding red traffic lights is investigated in \cite{asadi2011predictive,
mahler2014optimal, wan2016optimal, de2016eco}. The purpose in \cite%
{asadi2011predictive} is to track a target speed profile, which is generated
based on the feasibility of avoiding a sequence of red lights. The approach
uses model predictive control based on a receding horizon. Avoiding red
lights with probabilistic information at multiple intersections is
considered in \cite{mahler2014optimal}, where the time horizon is
discretized and deterministic dynamic programming is utilized to numerically
compute the optimal control input. The work in \cite{wan2016optimal} devises
the optimal speed profile given the feasible target time, which is within
some green light interval. A velocity pruning algorithm is proposed in~\cite%
{de2016eco} to identify feasible green windows, and a velocity profile is
calculated numerically in terms of energy consumption. Most existing work
solves the eco-driving problem with traffic light constraints numerically,
such as using dynamic programming \cite{mahler2014optimal, sun2018robust}, and
model predictive control \cite{asadi2011predictive}. To enable the real-time
use of such eco-driving methods, it is desirable to have an on-line
analytical solution.

From the above, it is clear that a need exists for developing new methods
for eco-driving of autonomous vehicles with traffic light constraints. This
paper aims to address this need by proposing an extension to our previous
approach from a single signalized intersection \cite{meng2018optimal, meng2018optimala} to
multiple intersections. We show explicitly that the optimal acceleration profile has a piecewise linear form, similar to the results in \cite{meng2018optimal, meng2018optimala}, that includes all state equality and temporal inequality constraints involved. It follows from the theoretical analysis that the optimal acceleration profile can be parameterized by a piecewise linear function of time, which offers a real-time analytical solution to eco-driving of
autonomous vehicles crossing multiple signalized intersections without
stopping. We illustrate the effectiveness of the proposed optimal parametric approach
through simulations and show that it yields better results
compared with our previous eco-driving approach \cite{meng2018optimal, meng2018optimala} applied
to each intersection individually. We also show that the optimal eco-driving algorithm can be adjusted to handle the case with interfering traffic under the assumption of the availability of some traffic information.

\section{Problem Formulation}

The vehicle dynamics are modeled by a double integrator%
\begin{align}
\dot{x}\left( t\right) &=v\left( t\right) ,  \label{dd} \\
\dot{v}\left( t\right) &=u\left( t\right) ,  \label{vd}
\end{align}%
where $x\left( t\right) $ is the travel distance of the vehicle relative to some origin on the road, which may include turns, $v\left( t\right) $
the velocity, and $u\left( t\right) $ the acceleration/deceleration. At $%
t_{0}$, the initial position and velocity are given by $x\left( t_{0}\right)
=x_{0}$ and $v\left( t_{0}\right) =v_{0}$, respectively. On-road vehicles
have to obey traffic rules, such as the minimum and maximum speed permitted $0<v_{\min }\leq v\left( t\right) \leq v_{\max }$.
The physical constraints on acceleration and deceleration are determined by
vehicle parameters $u_{\min }\leq u\left( t\right) \leq u_{\max }$,
where $u_{\min }<0$ and $u_{\max }>0$ denote the maximum deceleration and
acceleration, respectively.

Assume that there are $N$ intersections. Each intersection $i$ is equipped with a traffic
light, which is dictated by the square wave $f_{i}\left( t\right) $ defined below%
\begin{equation*}
f_{i}\left( t\right) =\left\{
\begin{array}{ll}
1, & \text{when }kT_{i}\leq t\leq kT_{i}+D_{i}T_{i}, \\
0, & \text{when }kT_{i}+D_{i}T_{i}<t<kT_{i}+T_{i},%
\end{array}%
\right.
\end{equation*}%
where $f_{i}\left( t\right) =1$ indicates that the traffic light is green,
and $f_{i}\left( t\right) =0$ indicates that the traffic light is red. The
parameter $0<D_{i}<1$ is the fraction of the time period $T_{i}$ during
which the traffic light is green, and $k=0,1,\ldots ,$ are non-negative integers. Assume that there is no offset among the signals. Our algorithm also supports dynamically actuated traffic signals if the time until green/red can be determined and communicated to the autonomous vehicle. Then we can re-solve the problem with the new timing information.

Let $\{t_{i}\}_{i=1}^{N}$ be a sequence of intersection crossing times with $%
t_{i+1}>t_{i}$, that is, $x\left( t_{i}\right) =\sum\nolimits_{j=1}^{i}l_{j}$,
where $l_{j}$ is the length of road segment $j$. To ensure stop-free
intersection crossing, $t_{i}$ must be within the green light interval, that
is, $kT_{i}\leq t_{i}\leq kT_{i}+D_{i}T_{i}$ for some non-negative integer $k$.

Our objective is the eco-driving of autonomous vehicles crossing multiple
intersections in terms of both time and energy efficiency. Therefore, our
problem formulation is given below:

\begin{problem}
\label{P1} ECO-AND Problem%
\begin{equation*}
\min_{u\left( t\right) }\rho _{t}\left( t_{p}-t_{0}\right) +\rho
_{u}\int_{t_{0}}^{t_{p}}u^{2}\left( t\right) dt
\end{equation*}%
subject to%
\begin{align}
&\left( \text{\ref{dd}}\right) \text{ and }\left( \text{\ref{vd}}\right)
\label{c1} \\
&x\left( t_{i}\right) =\sum\nolimits_{j=1}^{i}l_{j},\text{ }i=1,\ldots ,N
\label{c2} \\
&v_{\min }\leq v\left( t\right) \leq v_{\max }  \label{c3} \\
&u_{\min }\leq u\left( t\right) \leq u_{\max }  \label{c4} \\
&k_{i}T_{i}\leq t_{i}\leq k_{i}T_{i}+D_{i}T_{i},\text{ }i=1,\ldots ,N
\label{c5}
\end{align}%
for some non-negative integers $k_{i}$, where $\rho _{t}$ and $\rho _{u}$
are the weighting parameters, and $t_{p}=t_{N}$ is the time when the vehicle
arrives at the last intersection.
\end{problem}

Procedures for
normalizing these two terms for the purpose of a well-defined optimization
problem can be found in~\cite{meng2018optimal, meng2018optimala} by properly determining
weights $\rho _{t}$ and $\rho _{u}$.
In Problem~\ref{P1}, the term $J^{t}=t_{p}-t_{0}$ is the travel time while $%
J^{u}=\int_{t_{0}}^{t_{p}}u^{2}\left( t\right) dt$ captures the energy
consumption; see \cite{malikopoulos2013optimization}.

\section{Main Results}

Before proceeding further, let us first introduce a lemma, which will be
used frequently throughout the following analysis.

\begin{lemma}
\label{l1} Consider the vehicle's dynamics (\ref{dd}) and (\ref{vd}) with
initial conditions $x_{0}$ and $v_{0}$. If the acceleration profile of the
vehicle has the form $u\left( t\right) =at+b$
during the time interval $\left[ t_{0},t_{1}\right] ,$ where $a$ and $b$ are
two constants, then%
\begin{align*}
v\left( t_{1}\right) =&v_{0}+b\left( t_{1}-t_{0}\right) +\frac{a}{2}\left(
t_{1}^{2}-t_{0}^{2}\right) \text{,} \\
x\left( t_{1}\right) =&x_{0}+v_{0}\left( t_{1}-t_{0}\right) +\frac{1}{2}%
b\left( t_{1}-t_{0}\right) ^{2} \\
&\quad+\frac{a}{6}\left( t_{1}^{3}+2t_{0}^{3}-3t_{0}^{2}t_{1}\right) \text{,}
\\
J^{u}\left( t_{1}\right) =&\frac{a^{2}}{3}\left( t_{1}^{3}-t_{0}^{3}\right)
+ab\left( t_{1}^{2}-t_{0}^{2}\right) +b^{2}\left( t_{1}-t_{0}\right) \text{.}
\end{align*}
\end{lemma}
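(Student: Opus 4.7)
The lemma is a direct consequence of integrating the double-integrator dynamics twice with the specified affine acceleration input, so the plan is essentially to carry out three successive integrations and simplify the resulting polynomial expressions in $t_0$ and $t_1$.

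First, I would substitute $u(t)=at+b$ into (\ref{vd}) and integrate from $t_0$ to $t_1$ with $v(t_0)=v_0$. This yields directly
\[
v(t_1)=v_0+\int_{t_0}^{t_1}(a\tau+b)\,d\tau = v_0+b(t_1-t_0)+\tfrac{a}{2}(t_1^2-t_0^2),
\]
which is the first claimed identity. Next, to obtain $x(t_1)$, I would integrate (\ref{dd}) after writing the running velocity $v(\tau)=v_0+b(\tau-t_0)+\tfrac{a}{2}(\tau^2-t_0^2)$ for $\tau\in[t_0,t_1]$, and then integrate term-by-term from $t_0$ to $t_1$. The only step requiring a small amount of algebraic care is the cubic piece coming from $\tfrac{a}{2}\int_{t_0}^{t_1}(\tau^2-t_0^2)\,d\tau$, which produces $\tfrac{a}{6}(t_1^3-t_0^3) - \tfrac{a}{2}t_0^2(t_1-t_0)$; combining and regrouping gives the stated form $\tfrac{a}{6}(t_1^3+2t_0^3-3t_0^2 t_1)$.

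Finally, for the cost $J^u(t_1)=\int_{t_0}^{t_1}(a\tau+b)^2\,d\tau$, I would expand the square to $a^2\tau^2+2ab\tau+b^2$ and integrate each monomial over $[t_0,t_1]$, which immediately yields the claimed expression $\tfrac{a^2}{3}(t_1^3-t_0^3)+ab(t_1^2-t_0^2)+b^2(t_1-t_0)$.

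There is no conceptual obstacle here; the only minor pitfall is the bookkeeping in the $x(t_1)$ expression, where one must be careful to keep track of the constant-of-integration terms so that the cubic polynomial in $t_0,t_1$ groups into the compact form $t_1^3+2t_0^3-3t_0^2 t_1$ rather than an unsimplified mixture. Since the computations are entirely routine, I would keep the written proof short and present it as three displayed integration identities with minimal commentary.
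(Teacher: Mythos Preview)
Your proposal is correct and follows exactly the approach the paper indicates: the paper states that the proof ``is obtained by using the kinematic equations of the vehicle (\ref{dd}) and (\ref{vd}) and the definition of $J_{u}$'' and omits the details, which are precisely the three routine integrations you carry out.
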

The proof is obtained by using the kinematic equations of the
vehicle (\ref{dd}) and (\ref{vd}) and the definition of $J_{u}$. Due to space constraints, the details are omitted.
\begin{remark}
We will show in Theorem~\ref{T1} below that in fact the \emph{optimal} acceleration
profile for Problem~\ref{P1} is of the form $u\left( t\right) =at+b$, which
captures most acceleration profiles used in the literature and vehicle
simulation software \cite{prescan2016}. When $a=b=0$, the vehicle travels at
a constant speed. When $a=0$, the acceleration profile becomes either
constant acceleration ($b>0$) or constant deceleration ($b<0$). When $a\neq 0
$, the resulting linear acceleration profile is also called ``smooth jerk"
\cite{prescan2016}.
\end{remark}

In order not to overshadow the main idea, we consider the case of only two
consecutive intersections here, where $t_{p}=t_{2}$. We will show how the proposed framework can
include our previous result on a single intersection \cite{meng2018optimal, meng2018optimala}
as a special case in Subsection~\ref{fds} and can be extended to the case of
more than two intersections in Subsection~\ref{fdm}.

The main challenge of extending the result from one intersection \cite%
{meng2018optimal, meng2018optimala} to multiple intersections lies in the interior-point
constraints $x\left( t_{1}\right) =l_{1}$ and $kT_{1}\leq t_{1}\leq kT_{1}+D_{1}T_{1}$. Note that we have both a spatial equality constraint and a temporal
inequality constraint. Other constraints, such as states,
acceleration/deceleration, and terminal constraints, have been thoroughly
studied in our previous work \cite{meng2018optimal, meng2018optimala}. The following theorem
shows how the optimal acceleration profile is affected by the interior-point
constraints.

\begin{theorem}
\label{T1} The optimal acceleration profile $u^{\ast }\left( t\right) $ of
Problem \ref{P1} has the form%
\begin{equation*}
u^{\ast }\left( t\right) =a\left( t\right) t+b\left( t\right) ,
\end{equation*}%
where $a\left( t\right) $ and $b\left( t\right) $ are piece-wise constant.
In addition, $u^{\ast }\left( t\right) $ is continuous everywhere, and $%
u^{*}(t_{p}^{*})=0$.
\end{theorem}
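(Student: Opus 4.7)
The plan is to apply Pontryagin's Minimum Principle (PMP) adapted for free terminal time, control bounds, pure state constraints, and interior-point (state-equality) constraints. I will form the Hamiltonian $H = \rho_t + \rho_u u^2 + \lambda_x v + \lambda_v u$, whose adjoint equations read $\dot\lambda_x = 0$ and $\dot\lambda_v = -\lambda_x$ on any regular arc. Consequently, between events (junctions with constraint arcs or intersection crossings) $\lambda_x$ is constant and $\lambda_v$ is affine in $t$.

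On an unconstrained interior arc, the stationarity condition $\partial H/\partial u = 2\rho_u u + \lambda_v = 0$ yields $u^*(t) = -\lambda_v(t)/(2\rho_u) = a t + b$ with $a = \lambda_x/(2\rho_u)$ constant on the arc. On a saturated control arc, $u^* = u_{\max}$ or $u^* = u_{\min}$, matching the form with $a = 0$. On a boundary arc where $v = v_{\max}$ or $v = v_{\min}$, differentiating the (first-order) state constraint gives $u \equiv 0$, matching the form with $a = b = 0$. In every case $u^*$ is affine on each sub-interval, and since the constants change only at a finite collection of junction times, $a(t)$ and $b(t)$ are piecewise constant. This step draws on Lemma~\ref{l1} together with standard PMP results and is the routine part of the argument.

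The delicate step is continuity of $u^*$ at the interior-point crossing $t_1$ imposed by $\psi_1(x,v,t) := x - l_1 = 0$. The PMP jump conditions for interior-point constraints give $\lambda_x(t_1^-) - \lambda_x(t_1^+) = \pi_1\, \partial\psi_1/\partial x = \pi_1$ and $\lambda_v(t_1^-) - \lambda_v(t_1^+) = \pi_1\, \partial\psi_1/\partial v = 0$ for some scalar multiplier $\pi_1$. Thus $\lambda_v$ is continuous across $t_1$ while $\lambda_x$ may jump, so stationarity forces $u^*(t_1^-) = u^*(t_1^+)$ even though the slope $a$ of the acceleration profile can change at $t_1$. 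Continuity at junctions with saturation or state-boundary arcs follows from the standard junction conditions for first-order state constraints and for bang-arc entry/exit, establishing continuity of $u^*$ everywhere.

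Finally, the transversality conditions at $t_p^*$ close the argument. Since $v(t_p)$ is unconstrained terminally, $\lambda_v(t_p^*) = 0$; combined with stationarity, this gives $u^*(t_p^*) = -\lambda_v(t_p^*)/(2\rho_u) = 0$. If $t_p^*$ lies strictly inside its allowed green window, the free-terminal-time condition $H(t_p^*) = 0$ fixes $t_p^*$; if $t_p^*$ sits at an endpoint of the window, the additional temporal-inequality multiplier enters only the scalar condition on $H(t_p^*)$ and does not perturb $\lambda_v(t_p^*)$, so $u^*(t_p^*) = 0$ still holds. The main obstacle will be the careful bookkeeping of all junction conditions — particularly situations in which an interior-point crossing coincides with entry into or exit from a state-boundary or a saturated-control arc — and verifying that continuity of $u^*$ is preserved through every such junction.
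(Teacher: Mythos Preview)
Your proposal is correct and follows essentially the same approach as the paper: both set up the Hamiltonian $H=\rho_t+\rho_u u^2+\lambda_x v+\lambda_v u$, obtain $u^\ast=-\lambda_v/(2\rho_u)$ on interior arcs, argue piecewise-linearity from $\dot\lambda_x=0$ and $\dot\lambda_v=-\lambda_x$ together with the saturation and state-boundary cases, establish continuity at $t_1$ from the fact that the interior-point constraint $x(t_1)=l_1$ does not involve $v$ (so $\lambda_v$ is continuous while $\lambda_x$ may jump), and obtain $u^\ast(t_p^\ast)=0$ from the transversality condition $\lambda_v(t_p^\ast)=0$. The only cosmetic difference is that the paper derives the jump conditions by an explicit first-variation calculation in the style of Bryson--Ho, whereas you invoke the standard PMP interior-point jump formulas directly; the content is the same.
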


\begin{proof}
The interior-point constraints are dealt with by using the calculus of variations methodology
borrowed from \cite{bryson1975applied} with certain modifications. The
Hamiltonian $H(v,u,\lambda)$ and Lagrangian $L(v,u,\lambda,\mu,\eta)$ are defined as%
\begin{equation*}
H\left( v,u,\lambda\right) =\rho _{t}+\rho _{u}u^{2}\left( t\right) +\lambda
_{1}\left( t\right) v\left( t\right) +\lambda _{2}\left( t\right) u\left(
t\right)
\end{equation*}%
and%
\begin{align*}
L\left( v,u,\lambda,\mu,\eta\right)  =&H\left( v,u,\lambda\right) +\eta _{1}\left( t\right) \left[
v_{\min }-v\left( t\right) \right]  \\
&+\eta _{2}\left( t\right) \left[ v\left(
t\right) -v_{\max }\right] \\
&+\mu _{1}\left( t\right) \left[ u_{\min }-u\left( t\right) \right]\\
& +\mu_{2}\left( t\right) \left[ u\left( t\right) -u_{\max }\right] \text{,}
\end{align*}%
respectively, where $\lambda \left( t\right) =\left[ \lambda _{1}\left( t\right) \text{ }%
\lambda _{2}\left( t\right) \right] ^{T}$, $\mu \left( t\right) =\left[ \mu _{1}\left( t\right) \text{ }%
\mu _{2}\left( t\right) \right] ^{T}$, $\eta \left( t\right) =\left[ \eta _{1}\left( t\right) \text{ }%
\eta _{2}\left( t\right) \right] ^{T}$, and%
\begin{align}
&\eta _{1}\left( t\right) \geq 0,\text{ }\eta _{2}\left( t\right) \geq 0,
\label{xc1} \\
&\eta _{1}\left( t\right) \left[ v_{\min }-v\left( t\right) \right] +\eta
_{2}\left( t\right) \left[ v\left( t\right) -v_{\max }\right] =0,
\label{xc2} \\
&\mu _{1}\left( t\right) \geq 0,\text{ }\mu _{2}\left( t\right) \geq 0,
\label{xc3} \\
&\mu _{1}\left( t\right) \left[ u_{\min }-u\left( t\right) \right] +\mu
_{2}\left( t\right) \left[ u\left( t\right) -u_{\max }\right] =0.
\label{xc4}
\end{align}

According to Pontryagin's minimum principle, the optimal control $u^{\ast
}\left( t\right) $ must satisfy%
\begin{equation}
u^{\ast }\left( t\right) =\arg \min_{u_{\min }\leq u\left( t\right) \leq
u_{\max }}H\left( v^{\ast }\left( t\right) ,u\left( t\right) ,\lambda \left(
t\right) \right) ,  \label{pya}
\end{equation}%
which allows us to express $u^{\ast }\left( t\right) $ in terms of the
co-state $\lambda \left( t\right) ,$ resulting in%
\begin{equation}
u^{\ast }\left( t\right) =\left\{
\begin{array}{ll}
u_{\max } & \text{when }-\frac{\lambda _{2}\left( t\right) }{2\rho _{u}}\geq
u_{\max } \\
-\frac{\lambda _{2}\left( t\right) }{2\rho _{u}} & \text{when }u_{\min }\leq
-\frac{\lambda _{2}\left( t\right) }{2\rho _{u}}\leq u_{\max } \\
u_{\min } & \text{when }-\frac{\lambda _{2}\left( t\right) }{2\rho _{u}}\leq
u_{\min }%
\end{array}%
\right.   \label{optc}
\end{equation}%
For simplicity, we write $L(t)$, $H(t)$, $J(t)$ and $N(t)$ without the arguments of states, co-states, and multipliers in the rest of the paper. Adjoin the system differential equations (\ref{dd}) and (\ref{vd}) to $%
L\left( t\right) $ with multiplier function $\lambda \left( t\right) $:%
\begin{equation*}
J\left( t\right) =N\left( t_{1},t_{p}\right) +\int_{t_{0}}^{t_{p}}\left[ L\left(
t\right) -\lambda _{1}\left( t\right) \dot{x}\left( t\right) -\lambda
_{2}\left( t\right) \dot{v}\left( t\right) \right] dt,
\end{equation*}%
where $t_{1}$ is the first intersection crossing time, and%
\begin{align*}
&N\left( t_{1},t_{p}\right)  =\nu x\left( t_{p}\right) +\pi \left( x\left(
t_{1}\right) -l\right) +\upsilon _{1}\left( k_{1}T_{1}-t_{1}\right)  \\
&\qquad \qquad+\upsilon _{2}\left( t_{1}-k_{1}T_{1}-D_{1}T_{1}\right),\\
&\upsilon _{1}\geq 0, \quad \upsilon _{2}\geq 0,\\
&\upsilon _{1}\left( k_{1}T_{1}-t_{1}\right)+\upsilon _{2}\left( t_{1}-k_{1}T_{1}-D_{1}T_{1}\right)=0.
\end{align*}%
The first variation of the augmented performance index is%
\begin{align*}
\delta J\left( t\right) =&\delta N\left( t_{1}, t_{p}\right) \\
&+\delta
\int_{t_{0}}^{t_{p}}\left[ L\left( t\right) -\lambda _{1}\left( t\right)
\dot{x}\left( t\right) -\lambda _{2}\left( t\right) \dot{v}\left( t\right) %
\right] dt\text{.}
\end{align*}%
Split the integral into two parts:%
\begin{eqnarray*}
\delta J\left( t\right)  &=&\nu \delta x\left( t\right)
|_{t=t_{p}}+\left( \upsilon _{2}-\upsilon _{1}\right) dt_{1}+\pi dx\left(
t_{1}\right)  \\
&&+\left[ L\left( t\right) -\lambda _{1}\left( t\right) \dot{x}\left(
t\right) -\lambda _{2}\left( t\right) \dot{v}\left( t\right) \right]
|_{t=t_{1}^{-}}dt_{1} \\
&&-\left[ L\left( t\right) -\lambda _{1}\left( t\right) \dot{x}\left(
t\right) -\lambda _{2}\left( t\right) \dot{v}\left( t\right) \right]
|_{t=t_{1}^{+}}dt_{1} \\
&&-\lambda _{1}\left( t\right) \delta x\left( t\right)
|_{t_{0}}^{t_{1}^{-}}-\lambda _{1}\left( t\right) \delta x\left( t\right)
|_{t_{1}^{+}}^{t_{p}} \\
&&-\lambda _{2}\left( t\right) \delta v\left( t\right)
|_{t_{0}}^{t_{1}^{-}}-\lambda _{2}\left( t\right) \delta v\left( t\right)
|_{t_{1}^{+}}^{t_{p}} \\
&&+\int_{t_{0}}^{t_{p}}\left[ \dot{\lambda}_{2}\left( t\right) +\frac{%
\partial L\left( t\right) }{\partial v\left( t\right) }\right] \delta
v\left( t\right) dt \\
&&+\int_{t_{0}}^{t_{p}}\left\{ \dot{\lambda}_{1}\left( t\right) \delta
x\left( t\right) +\frac{\partial L\left( t\right) }{\partial u\left(
t\right) }\delta u\left( t\right) \right\} dt,
\end{eqnarray*}%
where we let $t_{1}^{-}$ signify just before $t_{1}$ and $t_{1}^{+}$ signify just after $t_{1}$.
We now make use of the relationships%
\begin{equation*}
dx\left( t_{1}\right) =\left\{
\begin{array}{c}
\delta x\left( t_{1}^{-}\right) +\dot{x}\left( t_{1}^{-}\right) dt_{1}, \\
\delta x\left( t_{1}^{+}\right) +\dot{x}\left( t_{1}^{+}\right) dt_{1},%
\end{array}%
\right.
\end{equation*}%
and the relationships of $v\left( t\right) $ can be derived similarly. Using
the above relationships to eliminate $\delta x\left( t_{1}^{-}\right) $ and $%
\delta x\left( t_{1}^{+}\right) $, and regrouping terms, yields%
\begin{align}
\delta J\left( t\right)  =&\left[ v-\lambda _{1}\left( t\right) %
\right] \delta x\left( t\right) |_{t=t_{p}}-\lambda _{2}\left( t\right)
\delta v\left( t\right) |_{t=t_{p}} \notag\\
&+\lambda _{1}\left( t\right) \delta x\left( t\right) |_{t=t_{0}}+\lambda
_{2}\left( t\right) \delta v\left( t\right) |_{t=t_{0}} \notag\\
&+\left[ L\left( t_{1}^{-}\right) -L\left( t_{1}^{+}\right) +\upsilon
_{2}-\upsilon _{1}\right] dt_{1} \label{VA} \\
&+\left[ \lambda _{1}\left( t_{1}^{+}\right) -\lambda _{1}\left( t_{1}^{-}\right)
+\pi \right] dx\left( t_{1}\right)  \notag\\
&+\left[ \lambda _{2}\left( t_{1}^{+}\right) -\lambda _{2}\left( t_{1}^{-}\right) %
\right] dv\left( t_{1}\right)\notag
\end{align}%
Since we have no constraint on $v\left( t\right) $ at $t=t_{1}$, it follows that $\lambda _{2}\left( t_{1}^{+}\right) =\lambda _{2}\left( t_{1}^{-}\right)$,
that is to say, there are no discontinuities in $\lambda _{2}\left( t\right)
$ at $t=t_{1}$. Therefore, $u^{\ast }\left( t\right) $ is continuous
everywhere based on (\ref{pya}) and Theorem 1 in \cite{meng2018optimal, meng2018optimala}. To make the term $\lambda_2(t_{p})\delta v(t_{p})$ in (\ref{VA}) vanish, we must have $\lambda_2(t_{p})=0$ since there are no constraints on $v (t)$ at $t = t_{p}$. From the optimality condition (\ref{optc}), we have $u^{*}(t_{p}^{*})=0$.

For the co-state $\lambda _{1}\left( t\right) ,$ we have%
\begin{equation*}
\dot{\lambda}_{1}\left( t\right) =-\frac{\partial L^{\ast }\left( t\right) }{%
\partial x}=0.
\end{equation*}%
However, since $dx\left( t_{1}\right) =0,$ $\lambda _{1}\left( t\right) $ may or may not have jumps at $t=t_{1}$. Therefore, $%
\lambda _{1}\left( t\right) $ can be written as%
\begin{equation}
\lambda _{1}\left( t\right) =\left\{
\begin{array}{cc}
\lambda _{1}^{-} & \text{for }t_{0}\leq t\leq t_{1}, \\
\lambda _{1}^{+} & \text{for }t_{1}<t\leq t_{p}\text{.}%
\end{array}%
\right.  \label{lamb1}
\end{equation}%
For the co-state $\lambda _{2}\left( t\right) ,$ we have%
\begin{equation}
\dot{\lambda}_{2}\left( t\right) =-\frac{\partial L^{\ast }\left( t\right) }{%
\partial v}=-\lambda _{1}\left( t\right) +\eta _{1}\left( t\right) -\eta
_{2}\left( t\right) \text{.}  \label{lamb2}
\end{equation}%
Depending on the value of $v\left( t\right) ,$ we have different cases:

Case I: $v_{\min }<v\left( t\right) <v_{\max }$. In this case, $\eta
_{1}\left( t\right) =\eta _{2}\left( t\right) =0$. Therefore, $\lambda
_{2}\left( t\right) $ linearly increases or decreases according to (\ref%
{lamb1}) and (\ref{lamb2}), and so does $u^{\ast }\left( t\right) $ based on (\ref%
{optc}).

Case II: $v\left( \tau\right) =v_{\min }$. In this case, we have $u^{\ast }\left(
t\right) \geq 0$ over some interval $[\tau, \tau+\alpha]$. When $u^{\ast }\left( t\right) =0$ over the interval $[\tau, \tau+\alpha]$, we must have $\lambda_{2}(t)=\dot{\lambda}_{2}(t)=0$, that is, $\eta
_{1}\left( t\right) =\lambda _{1}\left( t\right) $ from (\ref{lamb2}) and the fact that $\eta_{2}(t)=0$ based on (\ref{xc2}). When $u^{\ast }\left(
\tau^{+}\right) >0,$ $v\left( \tau^{+}\right) >v_{\min }$. Then, it becomes Case I.

Case III: $v\left( \tau\right) =v_{\max }$. In this case, we have $u^{\ast }\left(
t\right) \leq 0$ over some interval $[\tau, \tau+\alpha]$. When $u^{\ast }\left( t\right) =0$ over the interval $[\tau, \tau+\alpha]$, we have $\lambda_{2}(t)=\dot{\lambda}_{2}(t)=0$, that is, $\eta
_{2}\left( t\right) +\lambda _{1}\left( t\right) =0$ from (\ref{lamb2}) and the fact that $\eta_{1}(t)=0$ based on (\ref{xc2}). When $u^{\ast }\left(
\tau^{+}\right) <0$, $v\left( \tau^{+}\right) <v_{\max }$. Then, it becomes Case I.

Regardless of which of these three cases applies, the optimal control $u^{\ast }\left( t\right) $
always has a linear form.
\end{proof}

\begin{remark}
Assume that at $t_{1}$ all the states and acceleration/deceleration
constraints are relaxed. Then $L\left( t\right) $ is the same as $H\left(
t\right) $. To cause the coefficient of $dt_{1}$ to vanish, the condition $L\left( t_{1}^{-}\right) -L\left( t_{1}^{+}\right) +\upsilon _{2}-\upsilon
_{1}=0$ has to be satisfied. If $kT_{1}<t_{1}<kT_{1}+D_{1}T_{1}$, then $\upsilon
_{2}=\upsilon _{1}=0$. Therefore, there are no jumps in $L\left( t\right) $
and $H\left( t\right) $ at $t_{1}$. In other words, the co-state $\lambda_{1}
$ has no jumps in this case. However, when $t_{1}=kT_{1}$ or $%
t_{1}=kT_{1}+D_{1}T_{1}$, there may be jumps in $L\left( t\right) $ and $%
H\left( t\right) $ at $t_{1}$. Then $\lambda_{1}$ switches from one value to
another as shown in (\ref{lamb1}).
\end{remark}

Based on Theorem~\ref{T1}, we know that the optimal acceleration profile has
the form $u^{\ast }\left( t\right) =a\left( t\right) t+b\left( t\right)$, where $a\left( t\right) $ and $b\left( t\right) $ are piece-wise constant. \
For example, we have $a\left( t\right) =0, \quad b\left( t\right) =u_{\max }$ for $u^{\ast }\left( t\right) =u_{\max }$, and $a\left( t\right) =0, \quad b\left( t\right) =u_{\min }$ for $u^{\ast }\left( t\right) =u_{\min }$. For the case that $u\left(
t\right) =0$, we could set $a\left( t\right)=b\left( t\right) =0$. Most of
the time, $a\left( t\right) =a$ and $b\left( t\right) =b$ are just
constants. In addition, there are only a few time instants when $a\left(
t\right) $ and $b\left( t\right) $ switch from a constant to another. Such
instants include the time when the maximum acceleration starts to decrease,
the maximum deceleration starts to increase, the vehicle reaches the maximum
or the minimum allowed speed limits, or at the first intersection crossing time $t_{1}
$. Therefore, we could parameterize the \emph{optimal} acceleration profile by a
sequence of linear functions of time.

\section{Parametric Optimization}
Based on the analysis of the last section, the optimal acceleration profile can be parameterized by a sequence of linear functions of time, such as the one shown in Fig.~\ref{fig3f}.
Now let us split the analysis into two parts: $\left[ t_{0},t_{1}\right] $
and $\left[ t_{1},t_{2}\right] $, where $t_{1}$ is the first intersection
crossing time. The optimal acceleration profile at most has four switches at
$\tau_{1}, \tau_{3}, \tau_{5}, \tau_{6} $ as shown in Fig.~\ref{fig3f}. The
acceleration profile shown in Fig.~\ref{fig3f} is the most
complicated acceleration profile possible, starting with the maximum acceleration,
which can be obtained based on the optimality condition (\ref{optc}) and the
following facts:%
\begin{itemize}
\item $u^{\ast }\left( t_{2}^{*}\right) =0,$ which can be seen from Theorem~%
\ref{T1}.

\item Whenever $v\left( t\right) =v_{\min }$ or $v_{\max }$, $u^{\ast
}\left( t\right) =0$.

\item $u^{\ast }\left( t\right) $ is continuous without jumps.

\item Only at $t_{1},$ $\dot{\lambda}_{2}\left( t\right) $ may change sign
according to (\ref{lamb2}).
\end{itemize}
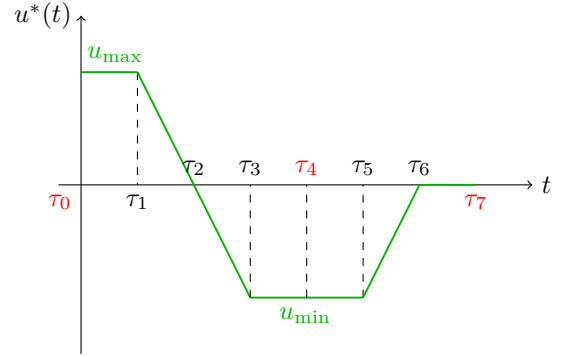
\begin{figure}[htbp]
\centering
\begin{tikzpicture}[scale=1.5]
  \draw[->] (-0.2,0) -- (4,0) node[right] {$t$};
  \draw[->] (0,-1.5) -- (0,1.5) node[left] {$u^{*}(t)$};
  \draw[line width=0.25mm,color=black!30!green](0,1) -- (0.5,1) node[above,xshift=-0.3cm] {$u_{\max}$};
  \draw[dashed] (0.5,1) -- (0.5,0) node[below] {$\tau_{1}$};
  \draw[line width=0.25mm,color=black!30!green](0.5,1) -- (1.5,-1);
  \draw[dashed] (1.5,-1) -- (1.5,0) node[above] {$\tau_{3}$};
  \draw[line width=0.25mm,color=black!30!green](1.5,-1) -- (2.5,-1) node[below left,xshift=-0.3cm] {$u_{\min}$};
  \draw[dashed] (2,-1) -- (2,0) node[above,red] {$\tau_{4}$};
  \draw[dashed] (2.5,-1) -- (2.5,0) node[above] {$\tau_{5}$};
  \draw[line width=0.25mm,color=black!30!green](2.5,-1) -- (3,0);
  \draw[line width=0.25mm,color=black!30!green](3,0) -- (3.5,0);
  \draw(0,0)node[below left,red] {$\tau_{0}$};
  \draw(1,0)node[above] {$\tau_{2}$};
  \draw(3,0)node[above] {$\tau_{6}$};
  \draw[red](3.5,0)node[below] {$\tau_{7}$};
\end{tikzpicture}
\caption{Optimal acceleration profile for two intersections}
\label{fig3f}
\end{figure}
A similar optimal acceleration profile can be drawn when it starts with the
maximum deceleration. The second fact above corresponds to the
interval $[\tau_{6},\tau_{7}]$ in Fig.~\ref{fig3f}, in which $v(t)=v_{\min}$
for $t\in [\tau_{6},\tau_{7}]$. The fourth fact above can be
visualized in Fig.~\ref{fig3f} as well. Before $\tau_{4}$, the acceleration
decreases monotonically; and after $\tau_{4}$ it increases monotonically.
Even though there are five linear functions in Fig.~\ref{fig3f}, seven
linear functions are needed to parameterize the acceleration profile. Over
the interval $[\tau_{1}, \tau_{3}]$ in Fig.~\ref{fig3f}, there is only one
linear function. In order to guarantee that the acceleration profile during
each interval contains either acceleration or deceleration but not both,
which is ensured by the constraint (\ref{p22}) below, we consider that there
is a switch at $\tau_{2}$ between acceleration and deceleration. Therefore,
two linear functions are used to parameterize the optimal acceleration
profile. By doing so, the speed either increases or decreases during each
interval. Therefore, the constraint (\ref{p21}) below ensures that the speed
is within the minimum and maximum bounds all the time.

We can thus parameterize the optimal acceleration profile as follows:%
\begin{equation*}
u^{\ast }\left( t\right) =\left\{
\begin{array}{cc}
a_{1}t+b_{1} & \text{for }t\in \left[ \tau _{0},\tau _{1}\right] \\
a_{2}t+b_{2} & \text{for }t\in \left[ \tau _{1},\tau _{2}\right] \\
a_{3}t+b_{3} & \text{for }t\in \left[ \tau _{2},\tau _{3}\right] \\
a_{4}t+b_{4} & \text{for }t\in \left[ \tau _{3},\tau _{4}\right] \\
a_{5}t+b_{5} & \text{for }t\in \left[ \tau _{4},\tau _{5}\right] \\
a_{6}t+b_{6} & \text{for }t\in \left[ \tau _{5},\tau _{6}\right] \\
a_{7}t+b_{7} & \text{for }t\in \left[ \tau _{6},\tau _{7}\right]%
\end{array}%
\right.
\end{equation*}%
where $\tau _{0}=t_{0},$ $\tau _{4}=t_{1}$, and $\tau _{7}=t_{2}$.

\begin{remark}
The optimal acceleration profile is overapproximated by the triplets $%
(a_{i},b_{i},\tau_{i})$, $i=1,2,\ldots,7$, 21 variables in total. The number
of variables can be reduced when the properties of $u^{*}(t)$ are
considered. The advantage of the parametric form of the optimal controller is that it simplifies the
complicated analysis through a computationally efficient scheme suitable for
real-time implementation. Also note that vehicles may experience both
acceleration and deceleration during a single road segment, which is
different from the optimal acceleration profile for a single intersection
\cite{meng2018optimal, meng2018optimala}.
\end{remark}

We have now shown that Problem~\ref{P1} is equivalent to this parametric optimization problem:

\begin{problem}
\label{P2} ECO-AND problem%
\begin{equation*}
\min \rho _{t}\tau_{7}+\rho _{u}\sum_{i=1}^{7}J^{u}_{i}
\end{equation*}%
subject to%
\begin{align}
&v_{\min }\leq v\left( \tau _{i}\right) \leq v_{\max },  \label{p21} \\
&\left( a_{i}\tau _{i}+b_{i}\right) \left( a_{i}\tau _{i-1}+b_{i}\right)
\geq 0,  \label{p22} \\
&u_{\min }\leq a_{i}\tau _{i}+b_{i}\leq u_{\max },  \label{p23} \\
&u_{\min }\leq a_{i}\tau _{i-1}+b_{i}\leq u_{\max },  \label{p24} \\
&\tau _{i-1}\leq \tau _{i},  \label{p25} \\
&i=1,\ldots ,7,  \notag \\
&k_{1}T_{1}\leq \tau _{4}\leq k_{1}T_{1}+D_{1}T_{1},  \label{p26} \\
&x\left( \tau _{4}\right) =l_{1}  \label{p27} \\
&k_{2}T_{2}\leq \tau _{7}\leq k_{2}T_{2}+D_{2}T_{2},  \label{p28} \\
&x\left( \tau _{7}\right) =l_{1}+l_{2}  \label{p29}
\end{align}%
where $J^{u}_{i}$ is the energy cost during the interval $%
[\tau_{i-1},\tau_{i}]$, which can be obtained as%
\begin{equation*}
J^{u}_{i}=\frac{a_{i}^{2}}{3}\left( \tau _{i}^{3}-\tau _{i-1}^{3}\right)
+a_{i}b_{i}\left( \tau _{i}^{2}-\tau _{i-1}^{2}\right) +b_{i}^{2}\left( \tau
_{i}-\tau _{i-1}\right)
\end{equation*}%
from Lemma~\ref{l1},
\begin{equation*}
v\left( \tau _{i}\right) =v\left( \tau _{i-1}\right) +b_{i}\left( \tau
_{i}-\tau _{i-1}\right) +\frac{a_{i}}{2}\left( \tau _{i}^{2}-\tau
_{i-1}^{2}\right)
\end{equation*}%
and%
\begin{eqnarray*}
x\left( \tau _{i}\right) &=&x\left( \tau _{i-1}\right) +v\left( \tau
_{i-1}\right) \left( \tau _{i}-\tau _{i-1}\right) \\
&&+\frac{b_{i}}{2}\left( \tau _{i}-\tau _{i-1}\right) ^{2}+\frac{a_{i}}{6}%
\left( \tau _{i}^{3}+2\tau _{i-1}^{3}-3\tau _{i-1}^{2}\tau _{i}\right) .
\end{eqnarray*}
\end{problem}

\begin{remark}
Problem~\ref{P2} is equivalent to Problem~\ref{P1}, where the continuous
velocity constraint (\ref{c3}) is ensured by (\ref{p21}) and (\ref{p22}).
The continuous acceleration constraint (\ref{c4}) is ensured by (\ref{p23}%
) and (\ref{p24}). The constraint (\ref{p25}) is needed to ensure the right order of the intersection crossing times of the autonomous vehicle driven by
the optimal acceleration profile.
\end{remark}

\begin{remark}
The parametric optimization framework is very general so that it can be used
to solve many different eco-driving problems. By taking into consideration
the driving comfort, we can just add the constraints $\left\vert
a_{i}\right\vert \leq a_{J}$, where $a_{i}$ corresponds to the jerk profile,
and $a_{J}$ is the limit of jerk tolerance \cite{barth2011dynamic}. The
parametric optimization framework can also easily incorporate an initial
acceleration condition, interior and terminal velocity/acceleration
constraints by adding additional equality or inequality constraints.
\end{remark}

In the following, we will show how this optimal parametric framework
includes our previous result \cite{meng2018optimal, meng2018optimala} as a special case, and
how to extend the framework to more than two intersections.

\subsection{Single Intersection}

\label{fds} Based on our analysis for a single intersection \cite%
{meng2018optimal, meng2018optimala}, the optimal acceleration profile can be parameterized as%
\begin{equation*}
u^{\ast }\left( t\right) =\left\{
\begin{array}{cc}
a_{1}t+b_{1} & \text{for }t\in \left[ \tau _{0},\tau _{1}\right] \\
a_{2}t+b_{2} & \text{for }t\in \left[ \tau _{1},\tau _{2}\right] \\
a_{3}t+b_{3} & \text{for }t\in \left[ \tau _{2},\tau _{3}\right]%
\end{array}%
\right.
\end{equation*}%
where $\tau _{0}=t_{0}$, and $\tau _{3}=t_{1}$. The optimal parameters $%
\left( a_{i},b_{i},\tau _{i}\right) $ for $i=1,2,3$ can be obtained by
solving the following optimization problem:

\begin{problem}
\label{P3} ECO-AND problem%
\begin{equation*}
\min \rho _{t}\tau_{3}+\rho _{u}\sum_{i=1}^{3}J^{u}_{i}
\end{equation*}%
subject to%
\begin{align}
&v_{\min }\leq v\left( \tau _{3}\right) \leq v_{\max }  \label{p31} \\
&u_{\min }\leq a_{1}\tau _{0}+b_{1}\leq u_{\max }  \label{p32} \\
&\tau _{i-1}\leq \tau _{i},\quad i=1,\ldots ,3  \label{p33} \\
&kT\leq \tau _{3}\leq kT+DT  \label{p34} \\
&x\left( \tau _{3}\right) =l,  \label{p35}
\end{align}%
where $J^{u}_{i}$ is the energy cost during the interval $\left[ \tau
_{i-1},\tau _{i}\right] ,$ which can be obtained as%
\begin{equation*}
J^{u}_{i}=\frac{a_{i}^{2}}{3}\left( \tau _{i}^{3}-\tau _{i-1}^{3}\right)
+a_{i}b_{i}\left( \tau _{i}^{2}-\tau _{i-1}^{2}\right) +b_{i}^{2}\left( \tau
_{i}-\tau _{i-1}\right)
\end{equation*}%
according to Lemma \ref{l1}, and%
\begin{eqnarray*}
x\left( \tau _{i}\right) &=&x\left( \tau _{i-1}\right) +v\left( \tau
_{i-1}\right) \left( \tau _{i}-\tau _{i-1}\right) \\
&&+\frac{b_{i}}{2}\left( \tau _{i}-\tau _{i-1}\right) ^{2}+\frac{a_{i}}{6}%
\left( \tau _{i}^{3}+2\tau _{i-1}^{3}-3\tau _{i-1}^{2}\tau _{i}\right) \\
v\left( \tau _{i}\right) &=&v\left( \tau _{i-1}\right) +b_{i}\left( \tau
_{i}-\tau _{i-1}\right) +\frac{a_{i}}{2}\left( \tau _{i}^{2}-\tau
_{i-1}^{2}\right) .
\end{eqnarray*}
\end{problem}

Note that we do not include the constraint (\ref{p22}) here since we have
established the result in \cite{meng2018optimal, meng2018optimala} that the optimal
acceleration profile contains either acceleration or deceleration, but not
both. Therefore, the terminal velocity constraint (\ref{p31}) can replace
the velocity constraint (\ref{p21}). Also based on the analysis in \cite%
{meng2018optimal, meng2018optimala}, the initial acceleration constraint (\ref{p32}) is
sufficient instead of using (\ref{p23}) and (\ref{p24}).

\subsection{Multiple Intersections}
\label{fdm} The optimal parametric framework for two intersections can be
easily extended to the case of more than two intersections. We can use three
triplets $\left( a_{i},b_{i},\tau _{i}\right) $ to parameterize the optimal
acceleration profile for a single intersection. For double intersections,
seven triplets $\left( a_{i},b_{i},\tau _{i}\right) $ are enough to
parameterize the optimal acceleration profile. It can be obtained by
mathematical induction that $4\left( N-1\right) +3$ triplets of the form $\left(
a_{i},b_{i},\tau _{i}\right) $ are enough to characterize the optimal
acceleration profile for $N$ intersections, where the proof will be shown in
a later version of this paper.

Therefore, for $N$ intersections, the ECO-AND problem can be solved by the
following optimization problem:

\begin{problem}
\label{P4} ECO-AND problem%
\begin{equation*}
\min \rho _{t}\tau_{4(N-1)+3}+\rho _{u}\sum_{i=1}^{4\left( N-1\right)
+3}J^{u}_{i}
\end{equation*}%
subject to%
\begin{align*}
& v_{\min }\leq v\left( \tau _{i}\right) \leq v_{\max } \\
& \left( a_{i}\tau _{i}+b_{i}\right) \left( a_{i}\tau _{i-1}+b_{i}\right)
\geq 0 \\
& u_{\min }\leq a_{i}\tau _{i}+b_{i}\leq u_{\max }, \\
& u_{\min }\leq a_{i}\tau _{i-1}+b_{i}\leq u_{\max }, \\
& \tau _{i-1}\leq \tau _{i}, \\
& i=1,2,\ldots ,4\left( N-1\right) +3, \\
& k_{\left\lceil \frac{j}{4}\right\rceil}T_{\left\lceil \frac{j}{4}%
\right\rceil}\leq \tau _{j}\leq k_{\left\lceil \frac{j}{4}%
\right\rceil}T_{\left\lceil \frac{j}{4}\right\rceil}+D_{\left\lceil \frac{j}{%
4}\right\rceil}T_{\left\lceil \frac{j}{4}\right\rceil}, \\
& x\left( \tau _{j}\right) =\sum\nolimits_{i=1}^{\left\lceil \frac{j}{4}%
\right\rceil }l_{i} \\
& j=4,8,\ldots ,4\left( N-1\right) ,4\left( N-1\right) +3,
\end{align*}%
where $J^{u}_{i}$ is the energy cost during the interval $\left[ \tau
_{i-1},\tau _{i}\right] $, which can be obtained as%
\begin{equation*}
J^{u}_{i}=\frac{a_{i}^{2}}{3}\left( \tau _{i}^{3}-\tau _{i-1}^{3}\right)
+a_{i}b_{i}\left( \tau _{i}^{2}-\tau _{i-1}^{2}\right) +b_{i}^{2}\left( \tau
_{i}-\tau _{i-1}\right)
\end{equation*}%
from Lemma 1, $\left\lceil x\right\rceil $ is the smallest integer greater
than or equal to $x,$%
\begin{equation*}
v\left( \tau _{i}\right) =v\left( \tau _{i-1}\right) +b_{i}\left( \tau
_{i}-\tau _{i-1}\right) +\frac{a_{i}}{2}\left( \tau _{i}^{2}-\tau
_{i-1}^{2}\right)
\end{equation*}%
and%
\begin{eqnarray*}
x\left( \tau _{i}\right) &=&\sum_{j=1}^{i-1}l_{j}+v\left( \tau _{i-1}\right)
\left( \tau _{i}-\tau _{i-1}\right) \\
&&+\frac{b_{i}}{2}\left( \tau _{i}-\tau _{i-1}\right) ^{2}+\frac{a_{i}}{6}%
\left( \tau _{i}^{3}+2\tau _{i-1}^{3}-3\tau _{i-1}^{2}\tau _{i}\right) .
\end{eqnarray*}
\end{problem}

\section{EXTENSION TO CASES WITH INTERFERING TRAFFIC}
In the above results, we assume that a single vehicle operates in free flow mode. However, the proposed method can be easily extended to traffic conditions where other road users may affect the movement of the autonomous vehicle. Therefore, a safety constraint has to be enforced at all times, that is,%
\begin{equation}
x_{h}\left( t\right)
-x\left( t\right)  \geq \alpha v\left( t\right) +\beta \label{sc}
\end{equation}%
where $x_{h}$ is the position of the preceding vehicle, $\alpha$ and $\beta$ are two scalars representing dynamic and static gaps, respectively \cite{ferrara2018freeway}. The following assumptions are made:
\begin{itemize}
\item On the road, the future speed and acceleration profiles of the preceding vehicle can be estimated accurately enough by the autonomous vehicle.
\item At the intersection, the queue information and stopped vehicle lengths are also available to the autonomous vehicle.
\end{itemize}%
When the proposed approach without considering interfering traffic is applied, the safety constraint in (\ref{sc}) may be violated. The first case is that the preceding vehicle will cross the intersection at some $t \in [kT_{i}, kT_{i}+D_{i}T_{i}]$ while the autonomous vehicle will not. In this case, the autonomous vehicle becomes the leading vehicle on the road, and the intersection crossing time is set as the beginning of the next green light interval $t_{i}=kT_{i+1}$ for some positive integer $k$. The safety constraint may not be violated since the preceding vehicle will accelerate or cruise through the intersection while the autonomous vehicle will decelerate to approach the intersection in the next green light interval. The second case is that both the preceding vehicle and the autonomous vehicle will cross the intersection at the same green light interval. In this case, we can simply decrease the maximum speed to $\theta v_{\max}$ with $0<\theta<1$ so that the safety constraint is satisfied at all times. The third case is that the preceding vehicle stopped before the intersection. In this case, the autonomous vehicle will cross the intersection after the preceding vehicle with a certain time gap $\sigma$ when the traffic signal turns to green so that the safety constraint is not violated, that is, $t_{i}=kT_{i}+\sigma$.

\section{SIMULATION EXAMPLES}
We evaluate the proposed solution by testing the following scenario with two
intersections in MATLAB. The length for each road segment is 200 meters. Each intersection is equipped with a traffic light. Two phases were set up for each signal. The total cycle is 40 seconds, where the green time is set to 20 seconds. The speed
limits are set as $v_{m}=2.78$ $m/s,$ and $v_{M}=20$ $m/s$. The maximum
acceleration and deceleration are $u_{M}=2.5$ $m/s^{2}$ and $u_{m}=-2.9$ $%
m/s^{2}$. We assume that the vehicle starts with $v_{0}=0$. We use our
previous approach for a single intersection \cite{meng2018optimal, meng2018optimala} as the
baseline scenario, which
solves the eco-driving problem for each road segment individually, and compare the proposed solution and the baseline
scenario. Table~\ref{tab11} shows the performance by using our previous
approach \cite{meng2018optimal, meng2018optimala} and the optimal parametric approach. Even
though our previous approach \cite{meng2018optimal, meng2018optimala} calculates the optimal
performance for each road segment, it is not the optimal solution for the
\emph{combined} two segments as a whole. 
Overall,
the optimal multi-intersection parametric approach outperforms \cite{meng2018optimal, meng2018optimala} by $10.29\%$.
Figures \ref{fig}-\ref{fig2} show the acceleration, speed, and distance
profiles for both the optimal multi-intersection parametric approach (blue curve) and \cite%
{meng2018optimal, meng2018optimala} (red curve). The optimal multi-intersection parametric approach verifies the
properties of the optimal acceleration profile, that is, continuity and $%
u^{*}(t_{p}^{*})=0$ even though such constraints are not enforced in Problem~%
\ref{P2}. In addition, the speed profile of the optimal multi-intersection parametric approach is
smoother than that of \cite{meng2018optimal, meng2018optimala} as seen from Fig.~\ref{fig1}.
We can see from Fig.~\ref{fig2} that the intersection crossing times of both
approaches are within the green light interval. The travel time is 40
seconds for both approaches, which is determined by the second traffic
light. However, their first intersection crossing times are different.
\begin{figure}[htbp]
\centerline{\includegraphics[width=\linewidth]{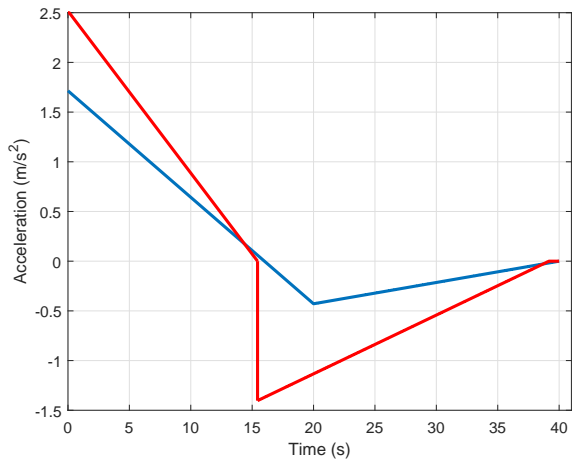}}
\caption{Acceleration profile of different methods: blue is the optimal multi-intersection parametric
approach and red is the controller from \cite{meng2018optimal,meng2018optimala}.}
\label{fig}
\end{figure}

\begin{figure}[htbp]
\centerline{\includegraphics[width=\linewidth]{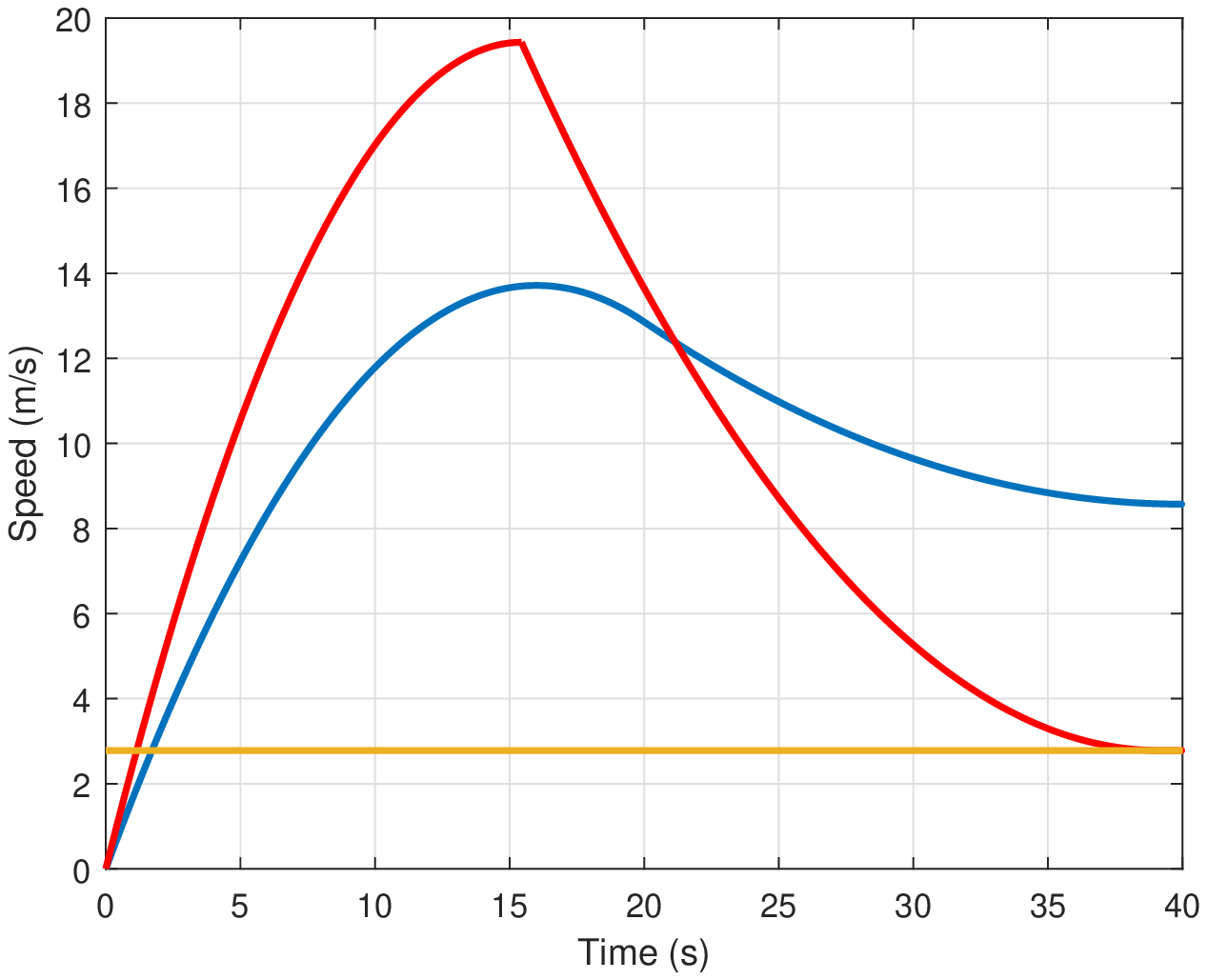}}
\caption{Speed profile of different methods: blue is the optimal multi-intersection parametric approach
and red is the controller from \cite{meng2018optimal,meng2018optimala}.}
\label{fig1}
\end{figure}

\begin{figure}[htbp]
\centerline{\includegraphics[width=\linewidth]{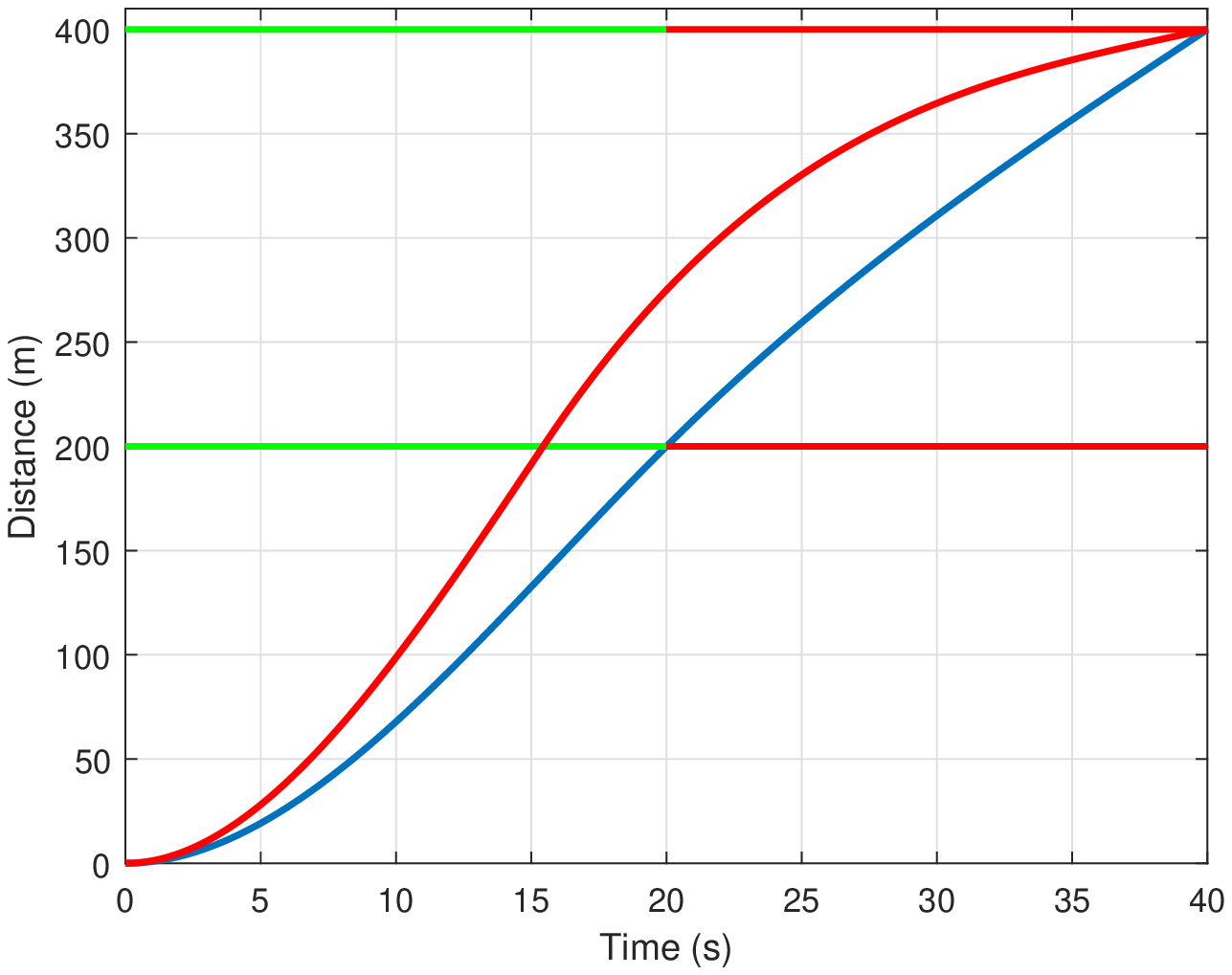}}
\caption{Distance profile of different methods: blue is the optimal multi-intersection parametric
approach and red is the controller from \cite{meng2018optimal,meng2018optimala}.}
\label{fig2}
\end{figure}

\begin{table}[htbp]
\caption{Performance Comparison}
\label{tab11}
\begin{center}
\begin{tabular}{|c|c|c|c|}
\hline
\multirow{2}{*}{\textbf{Method}} & \multicolumn{3}{|c|}{\textbf{Performance}}
\\ \cline{2-4}
& $J_{1}$ & $J_{2}$ & $J$ \\ \hline
\cite{meng2018optimal, meng2018optimala} & $0.1366$ & $0.1793$ & $0.3159$ \\ \hline
Optimal Parametric Approach & $0.1494$ & $0.1340$ & $0.2834$ \\ \hline
Improvement & $-9.67\%$ & $25.26\%$ & $10.29\%$ \\ \hline
\end{tabular}%
\end{center}
\end{table}

\subsection{Example with Interfering Traffic}
In the following, we consider a scenario where the autonomous vehicle will not be obstructed on the road but there is a vehicle which will stop before the second intersection. We assume that such information is available to the autonomous vehicle at time $t_{0}$. It is infeasible for the autonomous vehicle to cross the second intersection at 40 seconds as before. We assume that the feasible intersection crossing time is $t\in [44, 60]$, where the four seconds include driver's reaction time, headway time, and time for the stopped vehicle to clear the intersection. Figures \ref{fig4}-\ref{fig6} depict the acceleration profiles, speed profiles, and distance profiles, respectively, for both the cases with and without interfering traffic. It can be seen from the figures that the first intersection crossing times are the same for both cases. However, when there is interfering traffic, the autonomous vehicle has to start with a higher acceleration and decelerate more before the first intersection compared with the case without interfering traffic.

\begin{figure}[htbp]
\centerline{\includegraphics[width=\linewidth]{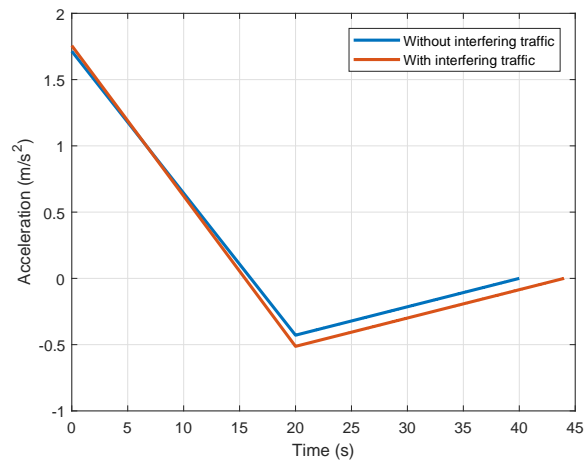}}
\caption{Acceleration profiles of the optimal solution with and without interfering traffic.}
\label{fig4}
\end{figure}

\begin{figure}[htbp]
\centerline{\includegraphics[width=\linewidth]{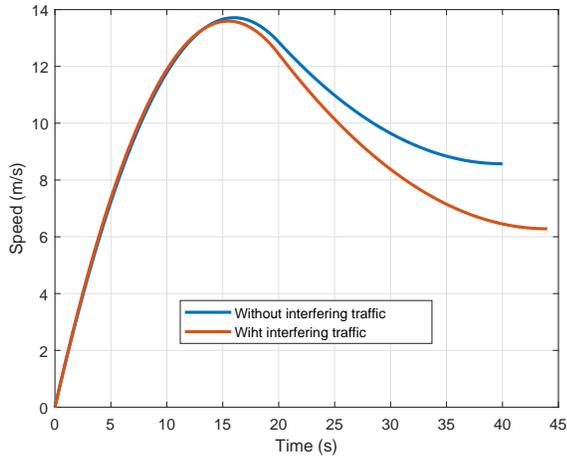}}
\caption{Speed profiles of the optimal solution with and without interfering traffic.}
\label{fig5}
\end{figure}

\begin{figure}[htbp]
\centerline{\includegraphics[width=\linewidth]{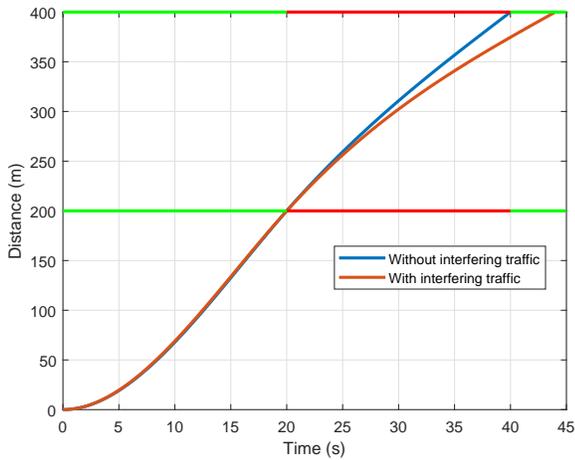}}
\caption{Distance profiles of the optimal solution with and without interfering traffic.}
\label{fig6}
\end{figure}

Here we do not compare the results with those of human-driven vehicles due to space constraints. Such a comparison was done in our previous work for a single intersection \cite{meng2018optimal, meng2018optimala}, where 2\%-10\% performance improvement was shown in terms of travel time and fuel consumption.
\section{Conclusions}

In this paper, we solve an eco-driving problem of
autonomous vehicles crossing multiple intersections without stopping.
Spatial equality constraints and temporal inequality constraints are used to
capture the traffic light constraints. Inspired by our previous work on a
single intersection, the optimal acceleration profile is proved to have a
piece-wise linear parametric structure. We illustrate the effectiveness of
the proposed parametric approach through simulation examples. The results
show that the performance is significantly improved by using the proposed
optimal parametric approach compared with our previous approach which is optimal for
each individual intersection decoupled from the other. We also show that the optimal eco-driving algorithm is capable of dealing with interfering traffic.

\end{document}